\documentclass[aps,12pt,superscriptaddress,amsfonts,amssymb,amsmath]{revtex4-2}
\pdfoutput=1

\usepackage[a4paper,margin=1in]{geometry}
\usepackage[T1]{fontenc}
\usepackage[utf8]{inputenc}
\usepackage{lmodern}
\usepackage{amsmath,amssymb,amsthm,mathtools}
\usepackage{physics}
\usepackage{bm}
\usepackage{microtype}
\usepackage{hyperref}
\usepackage{enumitem}
\usepackage{booktabs}
\usepackage{xcolor}
\usepackage{graphicx}
\hypersetup{colorlinks=true,linkcolor=blue,citecolor=blue,urlcolor=blue}

\newtheorem*{proposition*}{Proposition}
\newtheorem*{remark*}{Remark}

\begin{document}

\title{Effective Observer-Split Source Terms in Rotating Frames and Gravitomagnetic Backgrounds in Extended Aharonov-Bohm Electrodynamics}

\author{A.\ Iadicicco }
\affiliation{Università Popolare ``Nikola Tesla'', Via Trinità 33, 83100 Avellino, Italy}

\author{G.\ Modanese \footnote{Email address: giovanni.modanese@unibz.it}}
\affiliation{Libera Universita di Bozen-Bolzano, Facoltà di Ingegneria, Via Buozzi 1, I-39100 Bolzano, Italy}

\author{L.\ Verolino}
\affiliation{Università di Napoli ``Federico II'', Dipartimento di Ingegneria Elettrica e delle Tecnologie dell'Informazione, 
Via Claudio 21, 80125 Napoli, Italy}


\newtheorem{proposition}{Proposition}
\newtheorem{corollary}{Corollary}
\newtheorem{remark}{Remark}
\newcommand{\GGM}{\mathcal G_{\mathrm{GM}}}

\begin{abstract}
{\color{black}
We examine whether the observer split of a covariantly conserved electric current in a rotating finite system can provide a physically constrained effective source for the scalar sector of extended Aharonov--Bohm electrodynamics. We first recall that rotation, a stationary gravitomagnetic background, or a change of coordinates cannot by themselves generate genuine microscopic charge non-conservation for standard locally $U(1)$-invariant matter; this simple no-go statement is used to delimit the phenomenological. In a $3+1$ decomposition adapted to a rotating apparatus, the transport continuity equation contains the exact observer-split term $I_{\mathrm{split}}=N^{-1}D_i(\rho\beta^i)$, which reduces for weak rigid rotation to $I_G=\Omega\partial_\phi\rho$. We propose to use this restricted, congruence-dependent source as a phenomenological input to the extended scalar sector. The relevant observer/evolution structure is not an arbitrary coordinate choice: the laboratory synchronization and readout protocol select the foliation, while the rotor motion fixes the evolution field relative to it. We characterize which features of a genuine extra-current can and cannot be represented by this construction, and discuss falsifiable signatures, including rotation reversal, axisymmetric-source suppression, gradient scaling, and a recently proposed optically gated rotating-disc test. Analogue-gravity systems are also identified as a possible future setting in which a preferred material flow naturally selects the observer congruence.
}
\end{abstract}

\maketitle

\section{Introduction}

Extended Aharonov-Bohm electrodynamics allows the scalar quantity $S = \partial_\mu A^\mu$ to acquire dynamical significance and can remain consistent even when the primary source current is not locally conserved, because a secondary non-local current restores conservation at the level of the observable total source \cite{Modanese2017}. In the usual applications of this framework, the starting point is therefore a genuinely non-conserved primary current.
The same technique has been explored for coherent non-conservative tunnelling, fractional quantum mechanics, gauge-wave propagation, and more recently in gravitationally coupled scalar-tensor models \cite{Modanese2017,Modanese2017dipole,MinottiModanese2023,Minotti2024,MinottiModanese2026}.

{\color{black}
The broader motivation for this extension is to provide a consistent field-theoretic description when an effective or microscopic primary current does not obey a local continuity equation. If the primary source has an extra-current $I=\partial_\mu j^\mu\neq0$, the field equations generate a non-local secondary current whose divergence compensates that of the primary current, so that the total source of the electromagnetic field remains conserved. This ``charge-conservation censorship'' mechanism was introduced in Ref.~\cite{Modanese2017}. Subsequent work examined explicit non-local quantum sources, radiative multipoles, the generation and detection of gauge waves, and couplings to bosonic matter and scalar--tensor gravity \cite{Modanese2017dipole,MinottiModanese2021,MinottiModanese2023,Minotti2024,MinottiModanese2026}. More recently, idealized wave-function reduction and biased tunnelling have provided another concrete class of source models in which local charge conservation may fail at the event level and for which experimental tests with electronic devices have been proposed \cite{MinottiModaneseCollapse2026}. The present rotating construction explores whether a macroscopically defined observer-split transport term can play an analogous source role at the phenomenological level, without attributing a microscopic charge anomaly to rotation itself.

We also stress a point of terminology that is relevant when comparing this framework with the experimentally established Aharonov--Bohm interference effect. The present theory is not proposed as an alternative explanation of the standard magnetic Aharonov--Bohm phase shift \cite{AharonovBohm1959}. The name ``Aharonov--Bohm electrodynamics'' refers here to the reduced-gauge line of development associated historically with Ohmura's formulation and with the later discussion by Aharonov and Bohm of the physical status of the electromagnetic potentials and of subsidiary conditions \cite{Ohmura1956,AharonovBohm1963}. In the sector in which the source current is locally conserved, the secondary current vanishes and the field equations reduce to Maxwell electrodynamics. Thus the well-tested ordinary AB effect remains part of the standard electromagnetic sector; the additional scalar dynamics becomes relevant only when a non-conserved primary source, or the effective surrogate studied here, is introduced.
}

{\color{black}The question addressed here is therefore more specific: can rotation, or more generally the shift metric structure associated with a rotating frame or a stationary gravitomagnetic background, generate an \emph{effective} source term of the kind required by the scalar sector, even when the underlying charged matter is otherwise standard and its microscopic current remains conserved?}

The answer is twofold. At the microscopic level, the answer is no. For standard matter minimally coupled to electromagnetism and gravity through a generally covariant and locally $U(1)$-invariant action, the physical four-current is covariantly conserved on shell. Hence neither a stationary gravitomagnetic field nor a mere change of coordinates can by themselves produce a genuine microscopic charge anomaly.

At the level of observer-adapted transport variables, however, a weaker statement holds. Once the conserved four-current is decomposed with respect to a rotating $3+1$ slicing, the continuity equation for the observer-measured transport current on the spatial slice acquires an exact source term proportional to the shift vector. This source term is not a new microscopic law. Rather, it is the term required when the covariant conservation law is rewritten in variables adapted to the rotating frame of the device.

Within the gravito-electromagnetic description based on observer splitting, rotation and Sagnac-type effects in stationary geometries can be formulated through a formal gravito-magnetic Aharonov-Bohm analogy \cite{Ruggiero2004,Ruggiero2005,Ruggiero2010}. This is useful here only at the geometric level: the same shift structure that governs Sagnac-type holonomy also appears in the observer-split current decomposition used below, although no microscopic source law follows from that analogy by itself. The same observer-splitting formalism is widely used to describe relative dynamics and propagation effects in rotating or stationary spacetimes.

The distinction is central to the present paper. The claim is therefore deliberately modest and precise. We do not derive a new microscopic violation of charge conservation from rotation. Rather, we identify an observer-dependent source term that emerges in the rotating $3+1$ description of finite devices, and we investigate whether that term can consistently serve as the input of a phenomenological AB-type scalar-sector closure. In this sense the resulting bridge is effective rather than fundamental, frame-tied rather than local-inertial, and operationally meaningful only when the measurement protocol is adapted to the rotating apparatus, for example involving macroscopic wavefunctions and their long-range correlations.

\section{Extended AB electrodynamics in flat and curved spacetime}
In the covariant formulation of extended AB electrodynamics, the Lagrangian contains an additional term proportional to $(\partial_\mu A^\mu)^2$. One may define
\begin{equation}
S\equiv \partial_\mu A^\mu,
\end{equation}
and derive the generalized field equations in flat spacetime as \cite{Modanese2017}
\begin{equation}
\partial_\mu F^{\mu\nu}=j^\nu+i^\nu,
\qquad
i^\nu=-\partial^\nu\Box^{-1}(\partial_\alpha j^\alpha).
\label{eq:ABflat}
\end{equation}
Here $\Box\equiv \partial_\mu\partial^\mu$ denotes the flat-space d'Alembert operator. The observable current $j^\nu+i^\nu$ is always conserved, even when the primary current $j^\nu$ is not.

As a minimal covariant ansatz on a background spacetime $(M,g_{\mu\nu})$, one may write
\begin{equation}
\nabla_\mu F^{\mu\nu}=j^\nu+i^\nu,
\qquad
i^\nu=-\nabla^\nu\Box_g^{-1}(\nabla_\alpha j^\alpha),
\label{eq:ABcurved}
\end{equation}
where $\Box_g\equiv \nabla_\mu\nabla^\mu$ is the covariant d'Alembert operator. Formally, assuming $\Box_g^{-1}$ is defined with appropriate boundary conditions, one then has
\begin{equation}
\nabla_\nu\qty(j^\nu+i^\nu)=0
\end{equation}
identically. Equation \eqref{eq:ABcurved} should therefore be read as a minimal covariant extension of the flat-space secondary current formula, consistent with recent work coupling AB electrodynamics to scalar-tensor gravity \cite{Minotti2024,MinottiModanese2026}.

If the physical current obeys $\nabla_\mu j^\mu=0$, then $i^\mu=0$ and the theory collapses back to Maxwell theory.

\section{No-go premise: gravitomagnetism alone does not create a true extra-current}

{\color{black}
The proposition below is  essentially the curved-spacetime Noether identity for standard charged matter. Its purpose here is to make explicit what rotation alone can and cannot do. For a standard microscopic Noether current one has $\nabla_\mu J^\mu=0$, even in a rotating description or in a stationary gravitomagnetic background. The nonzero quantity considered later in the paper is instead the observer-split transport term obtained by decomposing this same conserved current. Thus the no-go result prevents us from interpreting that term as microscopic charge creation or destruction; the additional step of coupling it to the AB scalar sector is a separate phenomenological hypothesis.
}

\begin{proposition*}[No-go for standard matter]
Let $\psi$ denote any standard matter field minimally coupled to electromagnetism and gravity through a generally covariant, locally $U(1)$-invariant action $S_m[\psi,A,g]$ on a stationary spacetime with nonzero gravitomagnetic components $g_{0i}$. If the matter equations of motion are satisfied, then the electric current obtained from
\begin{equation}
J^\mu\equiv \frac{1}{\sqrt{-g}}\frac{\delta S_m}{\delta A_\mu}
\end{equation}
obeys
\begin{equation}
\nabla_\mu J^\mu=0.
\end{equation}
Therefore a stationary gravitomagnetic background generated by rotating masses does not, by itself, generate a genuine microscopic source for the scalar sector of extended AB electrodynamics.
\end{proposition*}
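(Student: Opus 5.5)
The plan is to derive the identity from gauge invariance of the matter action (Noether's second theorem). Consider an infinitesimal local $U(1)$ transformation with a compactly supported parameter $\lambda(x)$:
\begin{equation}
\delta A_\mu=\partial_\mu\lambda,\qquad \delta\psi=iq\lambda\psi ,
\end{equation}
with the corresponding conjugate transformation for $\bar\psi$, and analogous transformations for fields of other charge or spin. The metric is left unchanged. By hypothesis $S_m[\psi,A,g]$ is invariant, so $\delta S_m=0$ for every such $\lambda$.

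First write the first-order variation:
\begin{equation}
0=\delta S_m=\int d^4x\,\frac{\delta S_m}{\delta\psi}\,iq\lambda\psi+\text{(conj.)}+\int d^4x\,\frac{\delta S_m}{\delta A_\mu}\,\partial_\mu\lambda .
\end{equation}
The matter terms vanish on shell, since $\delta S_m/\delta\psi=0$ there. Next substitute $\delta S_m/\delta A_\mu=\sqrt{-g}\,J^\mu$ and integrate by parts, which produces no boundary term because $\lambda$ has compact support. This gives $\int d^4x\,\lambda\,\partial_\mu(\sqrt{-g}J^\mu)=0$ for arbitrary $\lambda$, and hence $\partial_\mu(\sqrt{-g}J^\mu)=0$.

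Then use the standard identity $\nabla_\mu J^\mu=(-g)^{-1/2}\partial_\mu(\sqrt{-g}J^\mu)$, valid for any vector field in any coordinates. This yields $\nabla_\mu J^\mu=0$.

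The argument never uses a particular form of $g_{\mu\nu}$. Stationarity, a nonzero $g_{0i}$, and rotating coordinates all enter only through $\sqrt{-g}$ and the connection, and both are absorbed by the identity above. The conclusion is also coordinate-independent, because $\nabla_\mu J^\mu$ is a scalar. Finally, feed this into Eq.~\eqref{eq:ABcurved}: since $\nabla_\alpha J^\alpha=0$, the source of $\Box_g^{-1}$ vanishes. With the boundary conditions assumed there (a trivial kernel, e.g.\ retarded conditions with no incoming scalar wave), this gives $i^\nu=0$. So the background supplies no genuine microscopic source for $S$.

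The main point needing care is what "standard matter" and "minimally coupled" mean. The action must be genuinely gauge invariant off shell, with no explicit background vector or symmetry-breaking terms. The gravitational coupling must enter only through $g_{\mu\nu}$, which does not transform under $U(1)$. For spinors, the spin connection is gauge-neutral, so it does not spoil the argument. I would state these assumptions explicitly and note that the derivation holds field by field, summing over charged species. The only other subtlety is the choice of a well-defined inverse $\Box_g^{-1}$ in the last step, and that is already assumed in Section II.
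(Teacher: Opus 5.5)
Your proposal is correct and is essentially the paper's own proof. Invariance of $S_m$ under $\delta A_\mu=\partial_\mu\lambda$, $\delta\psi=iq\lambda\psi$, with the matter terms vanishing on shell, gives $\int d^4x\,\lambda\,\partial_\mu(\sqrt{-g}\,J^\mu)=0$ for arbitrary $\lambda$, hence $\nabla_\mu J^\mu=0$; you simply make explicit what the paper's sketch leaves implicit, namely the compact support of $\lambda$, the identity $\nabla_\mu J^\mu=(-g)^{-1/2}\partial_\mu(\sqrt{-g}\,J^\mu)$, and the final step $i^\nu=0$ in Eq.~\eqref{eq:ABcurved} under the assumed boundary conditions.
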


\emph{Proof sketch.}
The statement is essentially Noether's theorem in curved spacetime. Under an infinitesimal local phase transformation $\psi\to e^{iq\alpha}\psi$, $A_\mu\to A_\mu+\nabla_\mu\alpha$, local gauge invariance of $S_m$ implies, on shell,
\begin{equation}
0=\delta S_m = \int d^4x\,\sqrt{-g}\, (\nabla_\mu J^\mu)\,\alpha,
\end{equation}
for arbitrary $\alpha$, hence $\nabla_\mu J^\mu=0$.
\hfill $\square$

\begin{remark*}
This result is stronger than a coordinate statement. In a curved or rotating coordinate system one may have $\partial_\mu J^\mu\neq 0$ while $\nabla_\mu J^\mu=0$, but this is a coordinate artifact. A covariant formulation such as \eqref{eq:ABcurved} uses $\nabla_\mu J^\mu$, not $\partial_\mu J^\mu$.
\end{remark*}

Accordingly, any interpretation in terms of genuine microscopic charge non-conservation requires additional assumptions beyond standard general relativity coupled to standard charged matter.

\section{The observer split in a rotating frame}

\subsection{ADM split and observer-measured transport variables}

Consider the ADM form of a stationary metric,
$$
ds^2 = -N^2 dt^2 + h_{ij}(dx^i + \beta^i dt)(dx^j + \beta^j dt),
$$
where $N$ is the lapse and $\beta^i$ is the shift vector. {\color{black}We use relativistic units with $c=1$ throughout the $3+1$ split. If physical time and length units are retained, $N$ and $\beta^i$ have dimensions of velocity and $N\to c$ in the Minkowski limit; the projected continuity equation is rescaled consistently.} In stationary weak-field geometries with frame dragging, $\beta^i$ is proportional to the gravitomagnetic potential.

Let $n^\mu$ be the future-directed unit normal to the hypersurfaces $t = \mathrm{const}$, with $n_\mu n^\mu = -1$. {\color{black}It is useful to distinguish this normal field from the ADM evolution vector $t^\mu\equiv(\partial/\partial t)^\mu$, which decomposes geometrically as
\begin{equation}
    t^\mu=Nn^\mu+\beta^\mu, \qquad n_\mu\beta^\mu=0.
    \label{eq:evolutionvector}
\end{equation}
Thus $n^\mu$ specifies the Eulerian observer congruence normal to the slices, whereas $t^\mu$ specifies how the spatial coordinates are carried from one slice to the next; the shift $\beta^\mu$ is the spatial part of this evolution field.}
The physical four-current $J^\mu$ can be decomposed as
$$
J^\mu = \rho\, n^\mu + j^\mu,
$$
with
$$
\rho \equiv -n_\mu J^\mu, \qquad j^\mu \equiv h^\mu{}_{\nu} J^\nu, \qquad n_\mu j^\mu = 0.
$$
Here $\rho$ is the charge density measured by the Eulerian observers associated with the chosen slicing, while $j^\mu$ is the transport current measured relative to that observer congruence. This observer split should be interpreted carefully. The fundamental source is still the covariantly conserved four-current $J^\mu$. The quantity $j^\mu$ is not a replacement for $J^\mu$; it is the transport current measured on the spatial slice defined by the rotating frame.

In particular, “spatial” means orthogonal to $n^\mu$, not necessarily with vanishing time component in arbitrary coordinates. In a slicing with nonzero shift, the coordinate time direction does not coincide with the physical time direction of the Eulerian observers, so a vector that is spatial in the observer sense need not have trivial coordinate components in an arbitrary chart.

{\color{black}
\subsection{Physical selection of the observer congruence}
\label{sec:physicalcongruence}
A central issue for any observer-dependent construction is what, if anything, selects the relevant observer structure. In the present model the foliation is not meant to be freely adjustable after the phenomenological closure has been specified. More precisely, the material world-lines of the rotor do not by themselves determine a unique ADM foliation. In a laboratory realization, the synchronization and readout protocol select the hypersurface normal $n^\mu$, while the rotor phase and angular motion specify the evolution field relative to those hypersurfaces and hence the shift $\beta^\mu$. The phenomenological source is therefore relational: it depends on the specified laboratory observer structure together with the rotating source motion.

This distinction separates dependence on physically supplied observer data from arbitrary coordinate dependence. A passive change of coordinates describing the same metric, current, foliation normal, evolution field, and apparatus cannot change a predicted observable. By contrast, changing the physically supplied foliation or evolution field changes the transport variables and therefore changes $I_{\rm split}$. Such dependence is familiar in effective descriptions of matter, where a medium four-velocity or detector world-line forms part of the physical specification. The present proposal consequently does not assign physical significance to the split source associated with every possible foliation; it assigns phenomenological significance only after the observer/evolution structure has been fixed operationally by the apparatus and measurement protocol.

This observation also suggests a possible connection with analogue-gravity systems. In acoustic or optical analogue spacetimes a material flow or moving medium naturally selects a preferred laboratory congruence, while perturbations propagate according to an effective curved metric \cite{BarceloLiberatiVisser2005}. Such systems do not reproduce the gravitational field equations themselves, but they may provide an experimentally accessible setting in which an observer split with a physically fixed shift-like flow can be implemented and tested. We leave the construction of a specific analogue-gravity realization to future work.
}

\subsection{Projected continuity equation and exact split source}

Starting from covariant conservation,
$$
\nabla_\mu J^\mu = \frac{1}{\sqrt{-g}}\,\partial_\mu\!\left(\sqrt{-g}\,J^\mu\right)=0,
$$
with
$$
\sqrt{-g} = N\sqrt{h},
$$
and using the above decomposition, one finds
$$
\partial_t(\sqrt{h}\,\rho) + \partial_i(N\sqrt{h}\,j^i) = \partial_i(\sqrt{h}\,\rho\,\beta^i).
$$
Equivalently,
\begin{equation}
\frac{1}{\sqrt{h}}\partial_t(\sqrt{h}\,\rho)+D_i(Nj^i)=D_i(\rho\,\beta^i).
\label{eq:projected_general}
\end{equation}
For the stationary slicing considered here, $\partial_t h_{ij}=0$ and therefore $\partial_t\sqrt{h}=0$, so Eq.~\eqref{eq:projected_general} reduces to
\begin{equation}
\partial_t\rho+D_i(Nj^i)=D_i(\rho\,\beta^i).
\label{eq:projected_stationary}
\end{equation}
Here the spatial divergence is naturally written with the Levi-Civita derivative $D_i$ of the spatial metric $h_{ij}$, whereas the time term reduces to an ordinary derivative precisely because the slicing is stationary. This equation is the key identity of the paper. It does not state that charge is locally created or destroyed. Rather, it states that when the exact covariant conservation law is rewritten in terms of observer-measured transport variables on a slicing with nonzero shift, the continuity equation for the transport current is not closed by itself: an additional term appears on the right-hand side. We define the exact split source
\begin{equation}
I_{\mathrm{split}} \equiv \frac{1}{N} D_i(\rho\,\beta^i).
        \label{eq:defIsplit}
\end{equation}
This term is observer-dependent and should not be interpreted as a frame-independent microscopic anomaly. It is the bookkeeping term required when the conserved four-current is rewritten in transport variables adapted to a rotating slicing.
 
\subsection{Weak-field rigid rotation and transient source condition}

In the weak-field, slow-rotation regime one has $N \simeq 1$, so the exact split source reduces to
\begin{equation}
    I_G = D_i(\rho\,\beta^i).
\label{eq:IGdef}
\end{equation}

For a rigidly rotating frame in flat space,
$$
\boldsymbol{\beta} = \boldsymbol{\Omega} \times \mathbf{r},
$$
and since $\nabla\cdot(\boldsymbol{\Omega} \times \mathbf{r}) = 0$, one obtains
$$
I_G = \nabla\cdot\big[\rho(\boldsymbol{\Omega} \times \mathbf{r})\big] = (\boldsymbol{\Omega} \times \mathbf{r})\cdot \nabla \rho.
$$
Hence the split source vanishes for spatially uniform charge density and is controlled by charge inhomogeneity in the rotating frame.

For applications it is useful to decompose the density as
$$
\rho(r,\phi,z,t) = \rho_0(r,z,t) + \delta\rho_s(r,\phi,z,t),
$$
where $\rho_0$ is approximately axisymmetric and $\delta\rho_s$ is a localized transient perturbation. In a frame rotating about the $z$ axis one then has
\begin{equation}
    I_G = \Omega\,\partial_\phi \rho \simeq \Omega\,\partial_\phi(\delta\rho_s)
\label{eq:IGphi}
\end{equation}
whenever the background term is axisymmetric to leading order.

The activation criterion is therefore not the existence of a transient as such, but the presence of a transient non-axisymmetric angular charge gradient in the rotating frame.

\section{Phenomenological AB closure sourced by the split term}

Up to this point, no modification of standard Maxwell theory has been made. The physical microscopic source remains the covariantly conserved four-current $J^\mu$, and the split source derived above is simply an exact rewriting of that conservation law in observer-adapted transport variables.

We now introduce an additional phenomenological step. Motivated by the exact source term appearing in the rotating-frame continuity equation, we define an observer-split auxiliary current. {\color{black}A coordinate-independent way of writing the current that in ADM components is $J_{\mathrm{eff}}^\mu=(\rho/N,j^i)$ is
\begin{equation}
    J_{\mathrm{eff}}^\mu
    \equiv J^\mu+\frac{\rho}{N}\,\beta^\mu
    =\frac{\rho}{N}t^\mu+j^\mu.
    \label{eq:Jeffcov}
\end{equation}
The equality follows directly from $J^\mu=\rho n^\mu+j^\mu$ and Eq.~\eqref{eq:evolutionvector}. It also makes clear that $J_{\mathrm{eff}}^\mu$ is a genuine four-vector once the foliation normal $n^\mu$ and evolution field $t^\mu$ have been supplied; its observer dependence is therefore not a passive coordinate dependence.

The divergence can be obtained without choosing ADM coordinates. For any vector $V^\mu$ tangent to the spatial hypersurfaces, one has
\begin{equation}
    \nabla_\mu V^\mu=D_\mu V^\mu+a_\mu V^\mu,
    \qquad
    a_\mu\equiv n^\nu\nabla_\nu n_\mu=D_\mu\ln N,
    \label{eq:spatialdividentity}
\end{equation}
where $D_\mu$ is the Levi--Civita derivative intrinsic to the hypersurfaces. Setting $V^\mu=(\rho/N)\beta^\mu$ gives
\begin{align}
\nabla_\mu\!\left(\frac{\rho}{N}\beta^\mu\right)
&=D_\mu\!\left(\frac{\rho}{N}\beta^\mu\right)
 +\frac{\rho}{N}\beta^\mu D_\mu\ln N \\
&=\frac{1}{N}D_\mu(\rho\beta^\mu).
\label{eq:Jeffdivstep}
\end{align}
Since the physical current obeys $\nabla_\mu J^\mu=0$, it follows that
\begin{equation}
    \nabla_\mu J_{\mathrm{eff}}^\mu
    =\frac{1}{N}D_\mu(\rho\beta^\mu)
    =I_{\mathrm{split}}.
    \label{eq:Jeffdivcov}
\end{equation}
Thus the same quantity previously obtained from the projected continuity equation is the four-divergence of a geometrically defined auxiliary current, once the observer/evolution structure is fixed. This does not make $I_{\mathrm{split}}$ observer independent: changing $n^\mu$ or $t^\mu$ changes the constructed current. It does show, however, that after those physical data are specified the construction is covariant under passive coordinate transformations.}

From this point onward we are no longer merely rewriting standard Maxwell theory in observer-adapted variables. The additional hypothesis is deliberately restricted to the scalar sector: we postulate that the exact observer-split divergence can act as the source entering the extended AB scalar equation,
\begin{equation}
    \Box_g S = \mu_0 I_{\mathrm{split}}.
    \label{eq:boxg}
\end{equation}
This is the phenomenological closure to be tested; it does not assert that the auxiliary current $J_{\mathrm{eff}}^\mu$ itself replaces the physical Maxwell source.

{\color{black}
This distinction is useful because $J_{\mathrm{eff}}^\mu$ is not uniquely determined by its divergence: adding any conserved vector field would leave $I_{\mathrm{split}}$ unchanged but would in general alter the ordinary transverse Maxwell source. To isolate the scalar hypothesis from such an additional and unnecessary assumption, one may introduce a minimal longitudinal surrogate primary current
\begin{equation}
    \mathcal J^\mu
    \equiv J^\mu + \nabla^\mu\Box_g^{-1} I_{\mathrm{split}},
    \qquad
    \nabla_\mu\mathcal J^\mu=I_{\mathrm{split}},
    \label{eq:Jsurrogate}
\end{equation}
where the Green function defining $\Box_g^{-1}$ is fixed by the chosen boundary conditions. The associated extended-AB secondary current is then
\begin{equation}
    i_{\mathrm{split}}^\mu
    =-\nabla^\mu\Box_g^{-1} I_{\mathrm{split}},
    \label{eq:isplitlong}
\end{equation}
so that
\begin{equation}
    \mathcal J^\mu+i_{\mathrm{split}}^\mu=J^\mu,
    \qquad
    \nabla_\mu\bigl(\mathcal J^\mu+i_{\mathrm{split}}^\mu\bigr)=0.
    \label{eq:totalsourceJ}
\end{equation}
Thus the minimal closure leaves the ordinary Maxwell source $J^\mu$ unchanged while allowing a non-trivial scalar/potential sector sourced by $I_{\mathrm{split}}$. Equations~\eqref{eq:Jsurrogate}--\eqref{eq:totalsourceJ} are not an additional microscopic model for matter; they are a convenient longitudinal realization of the surrogate source required by Eq.~\eqref{eq:boxg}. The geometrically constructed current $J_{\mathrm{eff}}^\mu$ remains useful because Eq.~\eqref{eq:Jeffdivcov} proves that $I_{\mathrm{split}}$ is the divergence of a well-defined observer-dependent four-vector, but no physical significance is assigned here to the divergence-free part of $J_{\mathrm{eff}}^\mu$.
}

The closure should therefore be read as an effective macroscopic hypothesis for rotating finite systems. The microscopic Noether current remains covariantly conserved, and the standard Maxwell field sourced by that current is not altered by the minimal longitudinal construction above.

{\color{black}
\section{Which features of a genuine extra-current can the split model represent?}
\label{sec:surrogate}
The observer-split term is not a generic substitute for a genuine frame-independent extra-current $I=\nabla_\mu J^\mu$. It represents only a restricted class of source structures fixed by the geometry and by the transported charge distribution. This restriction is important both conceptually and experimentally.

In the weak-field rigid-rotation limit,
\begin{equation}
I_{\rm split}\simeq I_G=\Omega\,\partial_\phi\rho .
\end{equation}
Consequently the effective source has several characteristic properties. It is odd under reversal of the angular velocity, $\Omega\to-\Omega$; it vanishes for an exactly axisymmetric density; it is controlled by spatial gradients rather than by a uniform density offset; and on a closed ring its positive and negative angular contributions balance,
\begin{equation}
\int_0^{2\pi} I_G\,d\phi =0,
\end{equation}
for a single-valued periodic density. More generally, because $I_{\rm split}$ is a spatial divergence, its volume integral is controlled by boundary fluxes. The model therefore naturally produces source--sink or bipolar patterns.

A genuine microscopic extra-current need not possess any of these restrictions: it is a four-dimensional scalar statement about the matter current, may have a non-zero local or integrated contribution depending on the microscopic process and boundary conditions, and does not require rotation. The present construction can mimic only those aspects of genuine non-conservation that enter the scalar field equation through a localized, transported, sign-changing source with the same symmetry content. In this precise sense the observer split is an effective surrogate for a limited class of extra-current profiles, not a microscopic realization of arbitrary charge non-conservation.

The restricted structure is also what makes the model falsifiable. A response attributed to the split source should inherit its oddness under rotation reversal and its suppression under restoration of axisymmetry. An effect that remains unchanged under $\Omega\to-\Omega$, or survives when $\partial_\phi\rho$ is eliminated while other experimental conditions are held fixed, would not support the present closure.
}

\section{What would be required for a true microscopic violation?}

Suppose one wants more than the effective bridge above, namely a genuine microscopic extra-current
$$
I = \nabla_\mu J^\mu \neq 0
$$
induced by rotation itself. Then the observer-split framework developed here is not sufficient. An additional physical mechanism is required.

Broadly speaking, three possibilities exist. First, intrinsically non-local matter dynamics, as in fractional or other non-local wave equations, may lead to a matter current that fails to satisfy a local continuity law \cite{Modanese2017dipole,MinottiModanese2023}. Second, quantum-coherent macroscopic matter sectors may exhibit an effective anomaly at the level of the current operator \cite{MinottiModanese2021,MinottiModanese2026}. Third, one may introduce a genuinely new coupling between the AB scalar sector and a gravitomagnetic invariant built from the vorticity or shift structure of the rotating congruence.

These possibilities lie beyond the present paper. Here we establish only an effective bridge based on the exact source term that appears in the rotating observer split of a covariantly conserved current.
 
\section{Operational signatures of the effective split-source model}

Because the proposed closure is observer-tied and effective, its empirical status should be assessed through operational signatures rather than through claims of fundamental charge violation. In the weak-field rotating regime, the model suggests the following possible checks:

{\color{black}
A useful feature of these signatures is that they can be converted into differential null tests. In particular, the combination of an odd response under rotation reversal with suppression for an axisymmetric source is more restrictive than a search for an unspecified rotation-dependent anomaly. A recently developed companion proposal makes this logic explicit for a small rotating aluminum disc carrying a fiber-optic ring, of which only a localized angular sector is optically active \cite{Iadicicco2026}. The illumination is used to modulate an effective surface or active-layer charge density in the metallic carrier; this induced density is to be measured or bounded independently by an electrometric or capacitive calibration synchronized with the optical modulation and rotor phase. The optical evanescent/tunnelling coefficient therefore serves only as a controllable gate and is not itself interpreted as charge density. With this operational definition the observer-split source retains the form $I_{\rm opt}\simeq\Omega\partial_\phi\rho_{\rm ind}$, and reversing the rotor direction changes its sign while nearly uniform illumination suppresses the angular-gradient term.
}

\begin{enumerate}

\item \emph{Rotation reversal}: Since $I_G$ is odd under $\beta \to -\beta$, any observable response that depends linearly on the split source changes sign under reversal of the sense of rotation.
\item \emph{Gradient dependence}: If the relevant charge density is spatially uniform, $I_G$ vanishes at leading order. The effect requires charge inhomogeneity.
\item \emph{Angular-structure dependence}: The observable response is controlled by non-axisymmetric charge gradients, especially transient ones.
\item \emph{Geometry dependence}: The observable response scales with the overlap between the charge structure and the shift or gravitomagnetic flow.
\item \emph{Transient sensitivity}: Smoothing or suppressing localized transient perturbations should reduce the source term.

\end{enumerate}

{\color{black}
The companion protocol proposes four especially direct controls at the source-generation stage: optical ON/OFF gating, CW/CCW reversal, comparison of a localized illuminated sector with nearly uniform $360^\circ$ illumination, and comparison of a conductive carrier with insulating or dummy controls \cite{Iadicicco2026}. For a localized angular profile of width $\sigma$, the characteristic source amplitude scales as $|I|_{\rm max}\propto |\Omega|/\sigma$, reproducing the dependence derived in Appendix~B.

In the present paper we retain only the electromagnetic consequence of this source construction. Once the optically modulated density has been independently calibrated, the proposed test is to search for the scalar/potential response associated with $S$ and to correlate it with the predicted source $I_{\rm opt}\simeq\Omega\partial_\phi\rho_{\rm ind}$. Detection strategies of the type previously discussed for gauge waves in extended AB electrodynamics provide a natural starting point \cite{MinottiModanese2023}. The decisive signature is differential: the electromagnetic response should reverse with $\Omega$, be strongly reduced when the illuminated density becomes axisymmetric, and scale with the calibrated angular gradient. A null result under a specified source amplitude and detector sensitivity would bound the observer-split scalar response assumed in Eq.~\eqref{eq:boxg}; a signal lacking the predicted reversal or axisymmetry dependence would not support the present closure.
}

A minimal reproducible numerical implementation of the transient model and its immediate diagnostics is provided in Appendix~D.

\section{Discussion and limitations}
The analysis developed in this paper leads to three main conclusions. First, standard general relativity coupled to standard charged matter does not predict a genuine microscopic violation of local charge conservation solely from a stationary gravitomagnetic background or from rotation. Second, a mathematically well-defined effective bridge nevertheless emerges after the $3+1$ split, where the projected continuity equation contains the exact split source term
$
I_{\mathrm{split}} = \frac{1}{N} D_i(\rho\,\beta^i),
$
whose weak-field limit is
$
I_G = D_i(\rho\,\beta^i).
$
Third, stronger microscopic claims require additional non-local, quantum-coherent, or explicitly new couplings beyond the framework developed here.

The present construction should therefore be interpreted as an effective phenomenological model. Its value lies in the fact that it is computable, falsifiable, and directly connected to experimentally accessible finite-scale signatures.

{\color{black}
The revised interpretation also narrows the meaning of ``frame dependence.'' The closure is not intended to generate a different physical law for every coordinate frame. Once the material congruence and experimental protocol are specified, they form part of the effective model, just as a medium four-velocity forms part of a relativistic constitutive description. The price is that the model is not fundamental or locally Lorentz-invariant in the same sense as the microscopic matter equations; the advantage is that its additional structure is physically identifiable rather than arbitrarily chosen. Whether this type of medium/congruence-dependent closure captures any real macroscopic response is therefore an experimental question.
}

\subsection{Local inertial removal and the scope of the effective claim}
A natural objection is that the split source term $I_G$
is local in form, and that the scalar sector is introduced through the local equation \eqref{eq:boxg}.
One may therefore ask whether the effect should disappear in a sufficiently small locally inertial frame, and if so why $I_G$ should be regarded as physically meaningful at all.

This objection is valid at the microscopic level and does not undermine the present construction. The underlying matter current remains covariantly conserved,
and this is precisely the content of the no-go result we have proven. Any sufficiently local freely falling description removes the apparent non-conservation associated with the observer-dependent split.

For this reason, $I_G$ should not be interpreted as a frame-independent local source law. Its meaning is operational: it is the exact source term that appears when charge transport is described in the rotating frame of a finite device. The proposed AB-type closure becomes meaningful only after coarse-graining over the spatial and temporal scales set by the apparatus, its transients, and the measurement protocol.

\section{Conclusion}
We have shown that rotating-frame geometry, together with stationary gravitomagnetic backgrounds when present, can be linked to extended Aharonov-Bohm electrodynamics only in a precise phenomenological sense. The point is not that rotating masses or rotating coordinates automatically violate local charge conservation. On the contrary, for standard generally covariant charged matter the physical four-current remains covariantly conserved, and this is the no-go result established here.

What rotation does provide, after a rotating $3+1$ decomposition, is an exact observer-dependent source term in the projected continuity equation.
This term is not a microscopic anomaly. It is the bookkeeping term that appears when the conserved four-current is rewritten in observer-measured transport variables adapted to a slicing with nonzero shift.

On that basis, we propose a phenomenological AB-type scalar-sector closure for finite rotating systems. In this restricted sense, rotation can be linked to extended Aharonov-Bohm electrodynamics without invoking anomalous tunnelling or a fundamental breakdown of charge conservation. The resulting framework is effective rather than fundamental, frame-tied rather than local-inertial, and experimentally meaningful only for finite rotating systems with non-uniform and transient charge structure.

{\color{black}
More precisely, the relevant dependence is on a physically selected material congruence rather than on an arbitrary coordinate representation. The construction represents only a restricted, spatially balanced class of effective extra-current profiles, with characteristic oddness under rotation reversal and suppression in the axisymmetric limit. These restrictions both distinguish the model from genuine frame-independent microscopic non-conservation and provide its sharpest falsification criteria. The recently proposed optically gated rotating-disc protocol illustrates how such criteria can be implemented in a finite laboratory system, while analogue-gravity media offer another possible direction in which the preferred congruence is supplied naturally by a material flow.
}

\appendix
\section{Derivation details for the projected source term $I_G$}
For completeness, we record the intermediate algebra leading from covariant conservation to the observer-split source term. Starting from
\begin{equation}
\nabla_\mu J^\mu = \frac{1}{\sqrt{-g}}\,\partial_\mu\!\left(\sqrt{-g}\,J^\mu\right)=0,
\qquad \sqrt{-g}=N\sqrt h,
\end{equation}
and introducing the Eulerian decomposition
\begin{equation}
J^\mu=\rho\,n^\mu+j^\mu,
\qquad
n_\mu j^\mu=0,
\qquad
n^\mu=\qty(\frac{1}{N},-\frac{\beta^i}{N}),
\end{equation}
so that
\begin{equation}
J^0=\frac{\rho}{N},
\qquad
J^i=j^i-\frac{\rho}{N}\beta^i,
\end{equation}
one obtains
\begin{align}
0 &= \partial_t\!\left(N\sqrt h\,J^0\right)+\partial_i\!\left(N\sqrt h\,J^i\right)\\
  &= \partial_t\!\left(\sqrt h\,\rho\right)+\partial_i\!\left(N\sqrt h\,j^i-\sqrt h\,\rho\,\beta^i\right).
\end{align}
Rearranging terms gives
\begin{equation}
\partial_t\!\left(\sqrt h\,\rho\right)+\partial_i\!\left(N\sqrt h\,j^i\right)=\partial_i\!\left(\sqrt h\,\rho\,\beta^i\right).
\end{equation}
Division by $\sqrt h$ and use of the spatial covariant derivative $D_i$ then yield
\begin{equation}
\partial_t\rho + D_i(Nj^i)=D_i(\rho\beta^i).
\end{equation}
In the weak-field regime $N\simeq 1$, so the projected continuity equation becomes
\begin{equation}
\partial_t\rho + D_i j^i = I_G,
\qquad
I_G\equiv D_i(\rho\beta^i).
\end{equation}
This makes clear that $I_G$ is not an ad hoc insertion: it is the exact source term produced by the $3+1$ split of a covariantly conserved current in a rotating background.

A simple dimensional check is also useful. Since $\beta^i$ has dimensions of velocity and $D_i$ contributes one inverse length, $I_G$ has the same dimensions as $\partial_t\rho$, namely charge density per unit time. This is the correct dimension for a source term in the projected continuity equation.

\section{Exact transient model and source suppression law}
To make the source condition explicit, consider a localized transient on a rotating ring,
\begin{equation}
\delta\rho_s(\phi,t)=A(t)\exp\!\left[-\frac{(\phi-\phi_c(t))^2}{2\sigma^2}\right],
\qquad \phi_c(t)=\Omega t+\phi_0.
\label{eq:appendixgauss}
\end{equation}
For rigid rotation about $z$, the effective source term is
\begin{equation}
I_G(\phi,t)=\Omega\,\partial_\phi\delta\rho_s
=-\Omega\,A(t)\,\frac{\phi-\phi_c(t)}{\sigma^2}
\exp\!\left[-\frac{(\phi-\phi_c(t))^2}{2\sigma^2}\right].
\label{eq:appendixIG}
\end{equation}
The maximum magnitude occurs at $|\phi-\phi_c|=\sigma$, hence
\begin{equation}
|I_G|_{\max}=\frac{|\Omega|\,A(t)}{\sigma}e^{-1/2}.
\label{eq:peakIG}
\end{equation}
Equation \eqref{eq:peakIG} makes the engineering dependence completely transparent: the source grows linearly with rotational rate and transient amplitude, and it grows when the transient becomes sharper in angle.

A mathematically clean model for transient suppression is convolution with a normalized Gaussian kernel of width $\lambda$. Convolving \eqref{eq:appendixgauss} with such a kernel leaves the profile Gaussian, but both broadens it and reduces its peak amplitude:
\begin{equation}
\delta\rho_s^{\rm supp}(\phi,t)
=
A(t)\frac{\sigma}{\sqrt{\sigma^2+\lambda^2}}
\exp\!\left[-\frac{(\phi-\phi_c(t))^2}{2(\sigma^2+\lambda^2)}\right].
\end{equation}
Therefore the suppressed effective source is
\begin{equation}
I_G^{\rm supp}(\phi,t)
=
-\Omega\,A(t)\,\frac{\sigma\,[\phi-\phi_c(t)]}{(\sigma^2+\lambda^2)^{3/2}}
\exp\!\left[-\frac{(\phi-\phi_c(t))^2}{2(\sigma^2+\lambda^2)}\right].
\end{equation}
The maximum magnitude again occurs at $|\phi-\phi_c|=\sqrt{\sigma^2+\lambda^2}$, hence
\begin{equation}
|I_G|_{\max}^{\rm supp}=|\Omega|\,A(t)\,e^{-1/2}\,\frac{\sigma}{\sigma^2+\lambda^2},
\end{equation}
and the exact reduction factor under normalized Gaussian smoothing is
\begin{equation}
\frac{|I_G|_{\max}^{\rm supp}}{|I_G|_{\max}}=\frac{\sigma^2}{\sigma^2+\lambda^2}\le 1.
\label{eq:suppressionlaw}
\end{equation}
Hence any low-pass system modeled in this way that smooths the transient, broadens its angular footprint, or lowers its amplitude must reduce the effective source term. If the transient suppression also attenuates amplitude by a factor $\eta\le 1$, then the total reduction becomes
\begin{equation}
\frac{|I_G|_{\max}^{\rm supp}}{|I_G|_{\max}}=\eta\,\frac{\sigma^2}{\sigma^2+\lambda^2}.
\end{equation}
This gives a compact explanation of why low-pass-type suppression should weaken the effective bridge when modeled by normalized Gaussian smoothing.

\section{Extension to a non-Gaussian waveform}\label{app:filtertheorem}
One can  prove a more general technical result that does not depend on the detailed waveform. Expand the transient charge perturbation on the rotating ring in angular Fourier modes,
\begin{equation}
\delta\rho_s(\phi,t)=\sum_{n\in\mathbb Z}\hat\rho_n(t)e^{in\phi}.
\end{equation}
Then, from $I_G=\Omega\,\partial_\phi\delta\rho_s$, one gets
\begin{equation}
I_G(\phi,t)=\Omega\sum_{n\in\mathbb Z} in\,\hat\rho_n(t)e^{in\phi}.
\end{equation}
Assume that a transient-suppression network acts linearly on the angular profile through modal multipliers $H_n$, so that
\begin{equation}
\widehat{\delta\rho_s^{\rm supp}}\!{}_n(t)=H_n\,\hat\rho_n(t),
\qquad |H_n|\le \eta\le 1.
\end{equation}
The suppressed effective source is then
\begin{equation}
I_G^{\rm supp}(\phi,t)=\Omega\sum_{n\in\mathbb Z} in\,H_n\hat\rho_n(t)e^{in\phi}.
\end{equation}
By Parseval's identity,
\begin{equation}
\|I_G^{\rm supp}(\cdot,t)\|_{L^2}^2
=2\pi\,\Omega^2\sum_{n\in\mathbb Z} n^2 |H_n|^2 |\hat\rho_n(t)|^2
\le \eta^2\,2\pi\,\Omega^2\sum_{n\in\mathbb Z} n^2 |\hat\rho_n(t)|^2
=\eta^2\|I_G(\cdot,t)\|_{L^2}^2.
\label{eq:L2bound}
\end{equation}
Hence
\begin{equation}
\|I_G^{\rm supp}(\cdot,t)\|_{L^2}\le \eta\,\|I_G(\cdot,t)\|_{L^2}.
\end{equation}
This gives a filter-agnostic theorem: \emph{any linear suppression network whose modal gains satisfy $|H_n|\le 1$ cannot increase the effective source norm, and any strict attenuation implies strict suppression of $I_G$ in the same norm}. If, moreover, the network is low-pass in the sense that $|H_n|$ decreases with $|n|$, then sharper transients are suppressed more strongly because the factor $n^2$ weights high-angular-frequency content more heavily in \eqref{eq:L2bound}.

This theorem is stronger than the Gaussian example because it does not depend on the detailed shape of the transient. It requires only linearity and bounded gain, which are the assumptions expected for a low-pass-type suppression. It therefore supports the operational claim that suppressing sharp transient structure suppresses $I_G$, without committing to a single waveform model.

\section{Minimal reproducible numerical model}
To keep the proposal falsifiable, we generated a minimal numerical model accompanying this manuscript. The Python script \texttt{gravitomagnetic\_ab\_model.py} and the MATLAB/Octave script \texttt{gravitomagnetic\_ab\_model.m} implement exactly the Gaussian transient model of Appendix B. They produce three immediate checks:
\begin{enumerate}[label=(\roman*)]
\item a localized transient generates a bipolar $I_G(\phi)$ profile according to \eqref{eq:appendixIG};
\item Gaussian smoothing reduces the peak source exactly as in \eqref{eq:suppressionlaw};
\item reversing the rotation $\Omega\to-\Omega$ flips the sign of $I_G$.
\end{enumerate}
These are consistency checks showing that the proposed effective source term is computable, reproducible, and experimentally falsifiable.

\begin{figure}[htbp]
\centering
\includegraphics[width=0.9\textwidth]{}
\caption{Localized transient $\delta\rho_s(\phi)$ and the corresponding induced source $I_G(\phi)=\Omega\partial_\phi\delta\rho_s$ on a rotating ring. Smoothing the transient lowers and broadens the induced source.}
\label{fig:transientIG}
\end{figure}

\begin{figure}[htbp]
\centering
\includegraphics[width=0.78\textwidth]{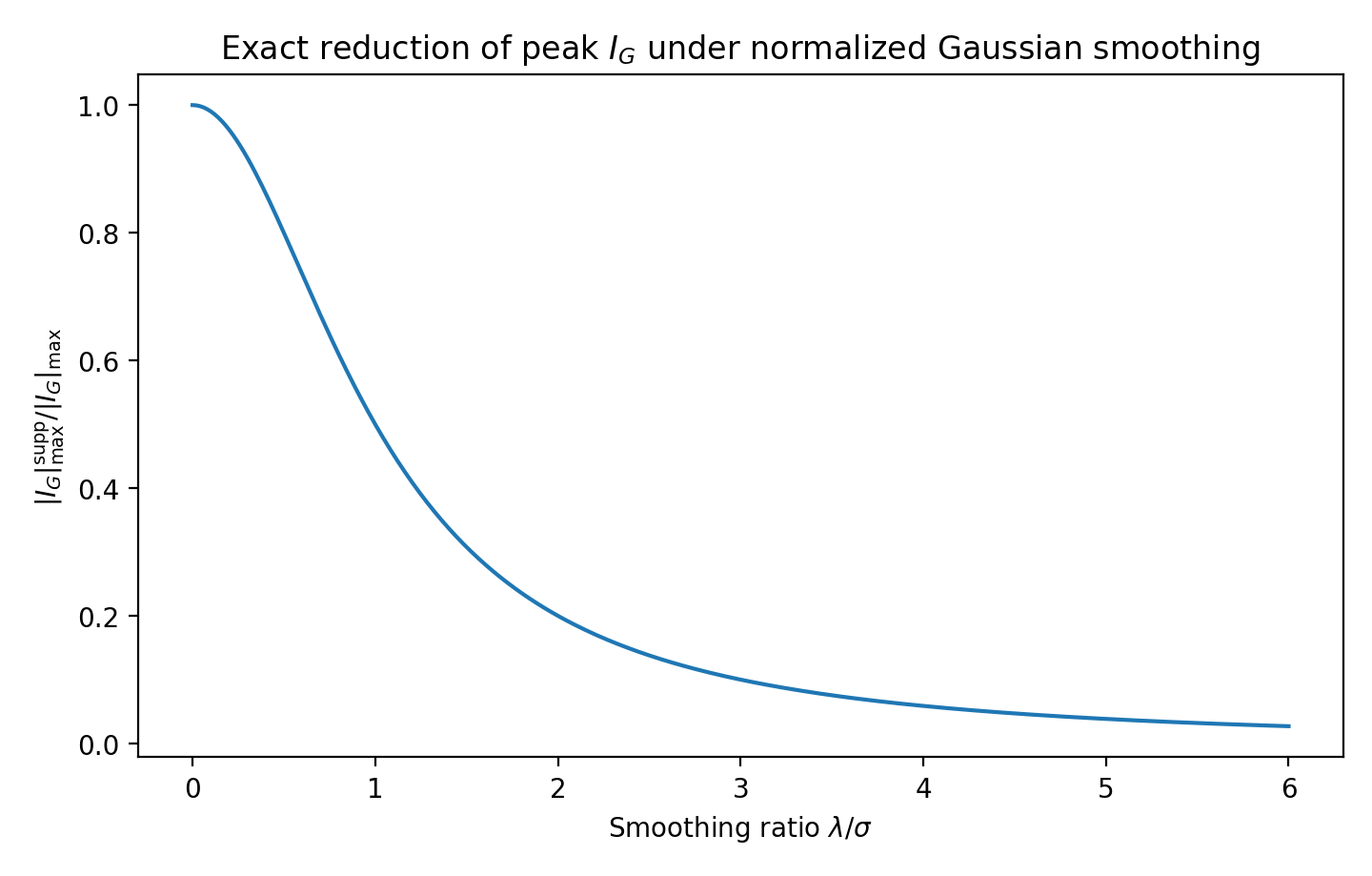}
\caption{Exact reduction of the peak effective source under normalized Gaussian smoothing. The normalized peak follows $\sigma^2/(\sigma^2+\lambda^2)$, so stronger smoothing necessarily suppresses $I_G$.}
\label{fig:suppressionlaw}
\end{figure}

\begin{figure}[htbp]
\centering
\includegraphics[width=0.78\textwidth]{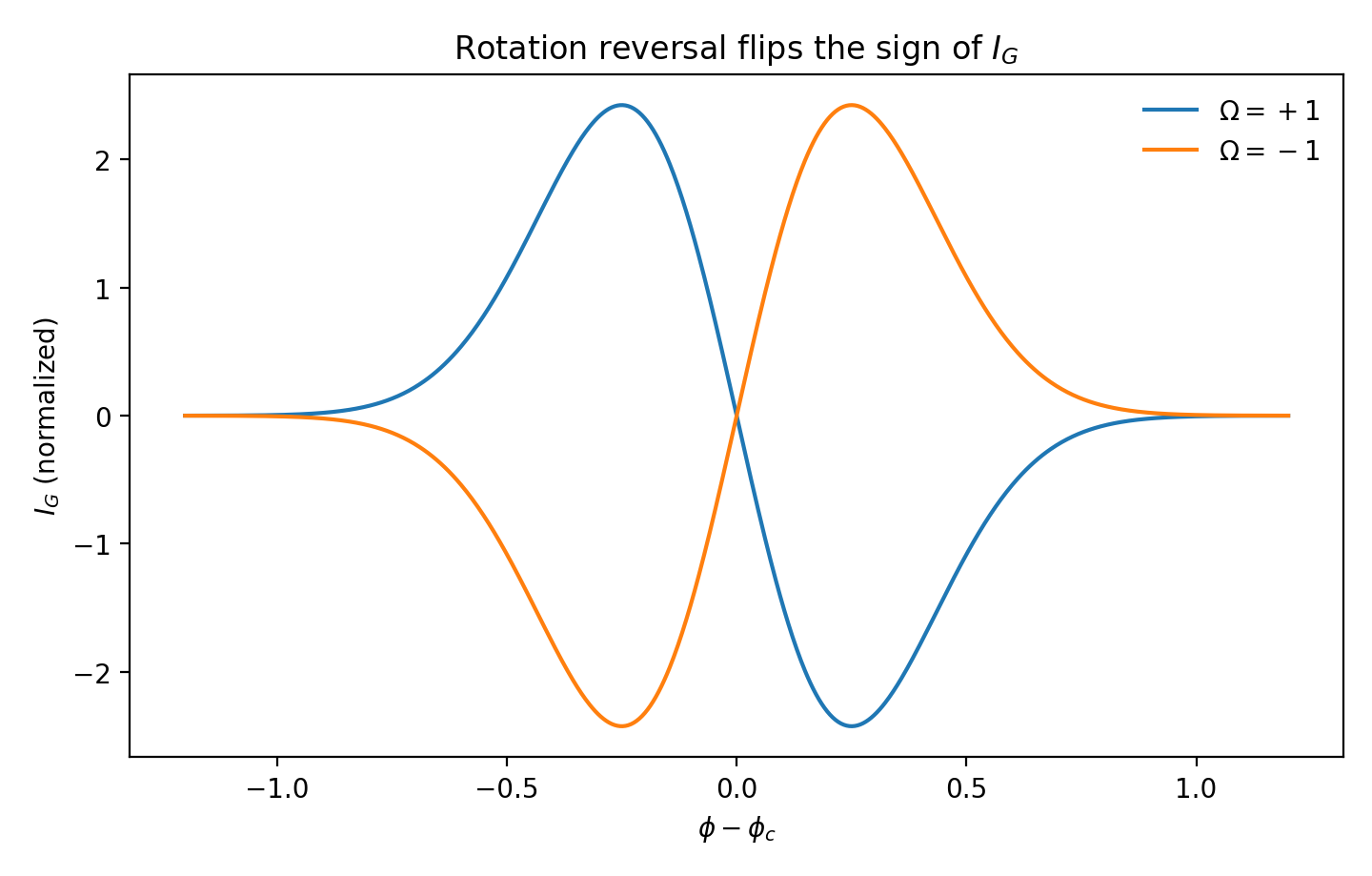}
\caption{Rotation reversal test. For the same localized transient, changing $\Omega\to-\Omega$ flips the sign of $I_G$, providing a sharp falsifiable prediction.}
\label{fig:signflip}
\end{figure}

\end{document}